\documentclass[runningheads]{llncs}
\usepackage{amsmath,amssymb,amsfonts,mathrsfs}
\usepackage{stmaryrd} 
\usepackage{graphicx}
\usepackage{booktabs}
\usepackage{dsfont}
\usepackage{hyperref}
\usepackage{array}
\usepackage{color}
%

\def\a0{i_0}

\newcommand{\eqdef}{\triangleq}
\renewcommand{\leq}{\leqslant}

\renewcommand{\geq}{\geqslant}
\renewcommand{\ge}{\geqslant}

\newcommand{\rg}[2]{\llbracket #1,#2  \rrbracket}
\newcommand{\card}[1]{\left | #1 \right |}
\newcommand{\sym}{\mathfrak{S}}
\newcommand{\scp}[2]{\left \langle #1 , #2  \right\rangle} 

\newcommand{\F}{\mathbb{F}}
\newcommand{\G}{\mathbb{G}}
\newcommand{\K}{\mathbb{K}}

\newcommand{\Z}{\mathbb{Z}}

\newcommand{\fq}{\F_{q}}
\newcommand{\fqq}{\F_{q^2}}
\newcommand{\fqm}{\F_{q^m}}

\newcommand{\word}[1]{\vec{\boldsymbol{#1}}}
\newcommand{\av}{\word{a}}
\newcommand{\bv}{\word{b}}
\newcommand{\cv}{\word{c}}

\newcommand{\gv}{\word{g}}

\newcommand{\uv}{\word{u}}
\newcommand{\vv}{\word{v}}

\newcommand{\xv}{\word{x}}
\newcommand{\yv}{\word{y}}
\newcommand{\zv}{\word{z}}
\newcommand{\zerov}{\word{0}}
\newcommand{\onev}{\mathds{1}}
\newcommand{\tauv}{\word{\tau}}

\newcommand{\code}[1]{\mathscr{#1}}
\newcommand{\dual}[1]{{#1}^\bot}

\newcommand{\grs}{\mathsf{GRS}}
\newcommand{\rs}{\mathsf{RS}}

\newcommand{\AC}{\code{A}}
\newcommand{\BC}{\code{B}}
\newcommand{\CC}{\code{C}}
\newcommand{\DC}{\code{D}}
\newcommand{\EC}{\code{E}}
\newcommand{\NC}{\code{N}}
\newcommand{\sh}[2]{{\mathcal{S}}_{#1}\left(#2\right)}
\newcommand{\pu}[2]{{\mathcal{P}}_{#1}\left(#2\right)}

\newcommand{\mat}[1]{\boldsymbol{#1}}
\newcommand{\Am}{\mat{A}}
\newcommand{\Bm}{\mat{B}}
\newcommand{\Km}{\mat{K}}

\newcommand{\Hm}{\mat{H}}
\renewcommand{\Im}{\mat{I}}
\newcommand{\Sm}{\mat{S}}
\newcommand{\Tm}{\mat{T}}
\newcommand{\Um}{\mat{U}}
\newcommand{\Vm}{\mat{V}}

\DeclareMathOperator{\tr}{T}
\DeclareMathOperator{\nr}{N}

\newcommand{\inv}[2]{#1^{#2}}
\newcommand{\Hpub}{\Hm_{\mathsf{pub}}}

\newcommand{\Ginv}{\mat{G}_{\mathsf{inv}}}
\newcommand{\orbit}[1]{\F_2\cdot{}{#1}}
\newcommand{\orbitp}[1]{\left(\F_2\cdot{}{#1} \right)}

\def\nvarsU{n_U}
\def\nvarsX{n_V}
\def\nvarsB{n_B}
\def\nvarsT{n_T}

\def\k0{k_0}
\def\n0{n_0}
\def\a0{a_0}
\def\codim{c}

\begin{document}
\title{Practical Algebraic Attack on DAGS}

%
%
\author{Magali Bardet\inst{1} \and
Manon Bertin\inst{1} \and
Alain Couvreur\inst{2} \and
Ayoub Otmani\inst{1}
}
%
%
\institute{LITIS, University of Rouen Normandie \\Avenue de l'universit\'e \\76801 Saint-\'Etienne-du-Rouvray, France 
\email{\{magali.bardet,manon.bertin8,Ayoub.Otmani\}@univ-rouen.fr}\\
\and
INRIA \& LIX, CNRS UMR 7161\\
\'Ecole polytechnique, 91128 Palaiseau Cedex, France\\
\email{alain.couvreur@lix.polytechnique.fr}}

\maketitle              
\begin{abstract}
  DAGS scheme is a key encapsulation mechanism (KEM) based on
  quasi-dyadic alternant codes that was submitted to NIST
  standardization process for a quantum resistant public key
  algorithm.  Recently an algebraic attack was devised by Barelli and
  Couvreur (Asiacrypt 2018) that efficiently recovers the private key.
  It shows that DAGS can be totally cryptanalysed by solving a system
  of bilinear polynomial equations.  However, some sets of DAGS
  parameters were not broken in practice.  In this paper we improve
  the algebraic attack by showing that the original approach was not
  optimal in terms of the ratio of the number of equations to the
  number of variables.  Contrary to the common belief that reducing at
  any cost the number of variables in a polynomial system is always
  beneficial, we actually observed that, provided that the ratio is
  increased and up to a threshold, the solving can be heavily improved
  by adding variables to the polynomial system. This enables us to
  recover the private keys in a few seconds. Furthermore, our
  experimentations also show that the maximum degree reached during
  the computation of the Gr\"obner basis is an important parameter
  that explains the efficiency of the attack. Finally, the authors of
  DAGS updated the parameters to take into account the algebraic
  cryptanalysis of Barelli and Couvreur. In the present article, we
  propose a hybrid approach that performs an exhaustive search on some
  variables and computes a Gr\"obner basis on the polynomial system
  involving the remaining variables. We then show that the updated set
  of parameters corresponding to 128-bit security can be broken with
  $2^{83}$ operations.

\keywords{Quantum safe cryptography  \and McEliece cryptosystem \and Algebraic cryptanalysis \and Dyadic alternant code.}
\end{abstract}

\section{Introduction}

The design of a quantum-safe public key encryption scheme is becoming an important issue with the recent process initiated by NIST to standardize one or more quantum-resistant public-key cryptographic algorithms.  One of the oldest cryptosystem that is not affected by the apparition of a large-scale quantum computer is the McEliece public key encryption scheme \cite{M78}.  It is a code-based cryptosystem that uses
the family of binary Goppa codes.
 The main advantage of this cryptosystem is its very fast encryption/decryption functions, and the fact that up to the present, nobody has succeeded to cryptanalyse it.

But in the eyes of those who are concerned with applications requiring very compact schemes, these positive aspects of the McEliece cryptosystem may not make up for its  large keys. For instance the classic McEliece \cite{BCLMNPPSSSW17} submitted to NIST uses 
at least 1MB public keys for 256 bits of security. A well-known approach for getting smaller keys consists in replacing binary Goppa codes by even more structured linear codes. A famous method started in \cite{G05} and further developed in \cite{BCGO09,MB09,BIGQUAKE,BBBCDGGHKNNPR17} relies on  codes displaying symmetries like cyclicity and dyadicity while having very efficient decoding algorithms.
Unlike the McEliece cryptosystem  which currently remains unbroken, the schemes 
\cite{BCGO09,MB09} are subject to efficient ciphertext-only attacks \cite{FOPT10} that recover the secret algebraic structure. The attack developed in \cite{FOPT10} formulates the general problem of recovering the algebraic structure of an alternant code as solving a system of polynomial equations. But it  involves very high degree polynomial equations with too many variables. Finding solutions to this kind of algebraic system is currently out of reach of the best known algorithms. However this approach turns out to be extremely fruitful when dealing with polynomial systems 
that come  from the quasi-cyclic \cite{BCGO09} and quasi-dyadic \cite{MB09} cryptosystems because the symmetries permit to reduce to a manageably small number of variables. 

The apparition of the algebraic attack  in \cite{FOPT10} generated a series of new algebraic attacks \cite{FOPPT14a,FOPPT15,FPP14} but since the original McEliece cryptosystem does not seem to be affected by this approach, it still raises the question of whether it represents a real threat.

Recently a new algebraic attack \cite{BC18} was mounted against DAGS \cite{BBBCDGGHKNNPR17} 
scheme. DAGS is a key encapsulation mechanism (KEM) based on quasi-dyadic alternant codes defined over quadratic extension. 
It was submitted to the standardisation process launched by NIST.
The attack relies on the component-wise product of codes in order to
build a system of bilinear multivariate equations.  The use of the
component-wise product of codes in cryptography is not new.  It first
appeared in \cite{W10} and has proved in several occasions
\cite{GOT12,GOT12a,CGGOT14,OT15} to be a powerful cryptanalytic tool
against algebraic codes like Generalised Reed-Solomon codes.  It even
enabled to mount for the first time a polynomial-time attack in
\cite{COT14,COT17} against a special family of non-binary Goppa codes
\cite{BLP10} displaying no symmetries.

\subsubsection{Our contribution.}

In this paper we improve the algebraic attack of \cite{BC18} by
showing that the original approach was not optimal in terms of the
ratio of the number of equations to the number of variables.  Contrary
to the common belief that reducing at any cost the number of variables
in a polynomial system is always beneficial, we actually observed
that, provided that the ratio is increased and up to a threshold, the
solving can be heavily improved by adding variables to the polynomial
system. This enables us to recover the private keys in a few
seconds. In Table~\ref{tab:comp} we report the average running times
of our attack and that of~\cite{BC18} performed on the same
machine. For DAGS-1 and DAGS-5, the linear algebra part of the attack
is the dominant cost.

Furthermore, our experimentations show that the maximum degree reached
during the computation of the Gr\"obner basis is an important
parameter that explains the efficiency of the attack. We observed that
the maximum degree never exceeds $4$.

Subsequently to the attack \cite{BC18}, the authors of DAGS updated
the parameters.  We
propose a hybrid approach that performs an exhaustive search on some
variables and computes a Gr\"obner basis on the polynomial system
involving the remaining variables. We then show that one set of
parameters does not have the claimed level of security.  Indeed the
parameters corresponding to 128-bit security can be broken with
$2^{83}$ operations.
 
\begin{table}[h]
    \begin{center}
        \caption{Running times  of the algebraic attack to break DAGS scheme. The computations are performed with Magma V2.23-1 on an Intel
Xeon processor clocked at 2.60GHz with 128Gb. We reproduced the computations from~\cite{BC18} on our machine.
  The columns ``Gr\"obner'' correspond to the Gr\"obner basis computation part, the columns ``Linear algebra'' to the linear algebra steps  of the attack.}
        \label{tab:comp}
        \begin{tabular}{@{}lcccc@{~~~}ccc@{}} \toprule
          Parameters & Security & \multicolumn{3}{c}{\cite{BC18}}  & \multicolumn{3}{c}{The present article}\\
                     && \ \ Gr\"obner \ \ & \ \ Linear \ \ & \ \ Total \ \ & \ \ Gr\"obner \ \  & \ \ Linear \ \  & \ \ Total \ \ \\
          && & algebra & & & algebra & \\
          \midrule
            DAGS-1 & $128$ & $552$s & $8$s &$560$s& $3.6$s & $6.4$s &$10$s \\
            DAGS-3 & $192$ & -- & -- & -- & $70$s & $16$s & $86$s \\
            DAGS-5 & $256$ & $6$s & $20$s  & $26$s & $0.5$s & $15.5$s& $16$s \\
            \bottomrule
        \end{tabular}
    \end{center}
\end{table}

\subsubsection{Organisation of the paper.} Section~\ref{sec:prelim} introduces the useful notation and important notions to describe the DAGS scheme. Section~\ref{sec:prod} recalls the important properties about the component-wise product of GRS and alternant codes. In Section~\ref{sec:attack} we describe the algebraic attack, and in Section~\ref{sec:result} the practical running times we obtained with our experimentations. Lastly, Section~\ref{sec:hybrid} explains the hybrid approach.

\section{Preliminaries} \label{sec:prelim}
\paragraph{\bf Notation.}
$\fq$ is the field with $q$ elements. In this paper, $q$ is a power of
$2$. For any $m$ and $n$ in $\Z$, $\rg{m}{n}$ is the set of integers
$i$ such that $m \leq i \leq n$.  The cardinality of set $A$ is
$\card{A}$.  Vectors and matrices are denoted by boldface letters as
$\av = (a_1,\dots{},a_n)$ and $\Am = (a_{i,j})$.  The \emph{Symmetric}
group on $n$ letters is denoted by $\sym_n$ and for any
$\vv = (v_1,\dots{},v_n)$ and $\sigma$ in $\sym_n$ we define
$\vv^\sigma \eqdef \left( v_{\sigma(1)},\dots{},v_{\sigma(n)}
\right)$.
The \emph{identity} matrix of size $n$ is written as $\Im_n$.
The transpose of a vector $\av$ and a matrix $\Am$ is denoted by
$\av^T$ and $\Am^T$.  The $i$-th row of a matrix $\Am = (a_{i,j})$ is
$\Am[i]$.
We recall that the \emph{Kronecker product} $\av\otimes \bv$ of two
vectors $\av = (a_1,\dots{},a_n)$ and $\bv$ is equal to
$(a_1\bv,\dots{},a_n \bv)$.  In particular, we denote by $\onev_n$
the all--one vector $\onev_n \eqdef (1,\dots, 1)\in \F_q^n$ and we have
$\onev_{n} \otimes \av = (\av,\dots{},\av)$ and
\[
\av \otimes  \onev_{n} = (a_1,\dots{},a_1,a_2,\dots{},a_2,\dots{},a_n,\dots{},a_n).
\] 

\medskip

Any $k$-dimensional vector subspace $\CC$ of $\F^n$ where $\F$ is field
is called a  \emph{linear code $\CC$ of length} $n$ and \emph{dimension} $k < n$ over $\F$. 
A matrix whose rows form a basis of $\CC$ is called a \emph{generator
  matrix}.  The \emph{orthogonal} or \emph{dual} of $\CC \subset \F^n$
is the linear space $\dual{\CC}$ containing all vectors $\zv$ from
$\F^n$ such that for all $\cv \in \CC$, we have
$\scp{\cv}{\zv} \eqdef \sum_{i=1}^n c_i z_i = 0$.  We always have
$\dim \dual{\CC} = n - \dim \CC$, and any generator matrix of
$\dual{\CC}$ is called a \emph{parity check} matrix of $U$.  The
\emph{punctured} code $\pu{I}{\CC}$ of $\CC$ over a set
$I \subset \rg{1}{n}$ is defined as
\[
\pu{I}{\CC} \eqdef  \Big\{ \uv \in \F^{n-\card{I}} ~\mid~  \exists \cv \in \CC, \; \; 
\uv = (c_i)_{i \in \rg{1}{n} \setminus I}  \Big\}.
\]
The \emph{shortened} $\sh{I}{\CC}$ code of $\CC$ over a set
$I \subset \rg{1}{n}$ is then defined as

\[
\sh{I}{\CC} \eqdef  \pu{I}{\Big\{ \cv \in \CC ~\mid ~
  \forall i \in I, \; \; c_i = 0     \Big\}}.
\]
We extend naturally the notation $\pu{I}{\cdot{}}$ to vectors and matrices.

\paragraph{\bf Algebraic codes.}

A \emph{generalized Reed-Solomon code} $\grs_t(\xv,\yv)$ of dimension $t$ and length $n$ where   $\xv$ is an $n$-tuple of distinct elements  from a finite field $\F$ 
 and  $\yv$ is an $n$-tuple  of non-zero elements from $\F$ is the linear code defined by
 \begin{equation}
\grs_{t}(\xv,\yv) 
 \eqdef  \Big \{ \big( y_1f(x_1), \dots{}, y_nf(x_n) \big)  ~\mid~ f \in \F_{< t }[z] \Big \}
 \end{equation}
 where $\F_{< t }[z]$ is the set of univariate polynomials $f$ with coefficients in $\F$ such that $\deg f < t$.
 The dimension of $\grs_{t}(\xv,\yv)$ is clearly $t$. 
 By convention $\grs_{t}(\xv,\onev_n)$ where $\onev_n$  is the all-one vector of length $n$ is simply a Reed-Solomon code denoted by $\rs_{t}(\xv)$.

The code $\dual{\grs_{t}}(\xv,\yv)$ is equal to $\grs_{n- t}(\xv,\dual{\yv})$
 where $\dual{\yv} = (\dual{y}_1,\dots{},\dual{y}_n)$ is the $n$-tuple  such that for all $j$ in $\rg{1}{n}$ 
 it holds that
 \begin{equation}   \label{dualgrs}
 \left(\dual{y}_j\right)^{-1} =y_j \prod_{\ell=1, \ell \neq j}^n (x_\ell - x_j).
 \end{equation}
An \emph{alternant code} $\AC_{t}(\xv,\yv)$ of degree $t \geq 1$  over a field $\K\subsetneq \F$ and length $n$
where $\xv$ is an $n$-tuple of distinct elements  from $\F^n$ 
 and  $\yv$ is an $n$-tuple  of non-zero elements from $\F^n$ is the linear code 
\begin{equation}
\AC_{t}(\xv,\yv) 
 \eqdef   \grs_{n-t}(\xv,\dual{\yv}) \cap \K^n = \dual{\grs}_{t}(\xv,\yv) \cap \K^n.
 \end{equation}
 The dimension of an alternant code satisfies the bound
$\dim  \AC_{t}(\xv,\yv)  \geq n - m t$ where $m$ is the degree of the field extension of $\F /\K$.
 \begin{remark}
Note that one has always the inclusions $\grs_{r}(\xv,\yv) \subseteq \grs_{t+r}(\xv,\yv)$ and $\AC_{t+r}(\xv,\yv) \subseteq \AC_{r}(\xv,\yv)$
for any  $r\geq 1$ and $t\geq 1$.
\end{remark}

\begin{proposition}\label{prop:affinetransform}
  Let $\grs_{k}(\xv,\yv)$ be a generalized Reed-Solomon code of
  dimension $k$ where $\xv$ is an $n$-tuple of distinct elements from
  $\fqm$ and $\yv$ is an $n$-tuple of non-zero elements from
  $\fqm$. For any affine map $\zeta : \fqm \rightarrow \fqm$ defined
  as $\zeta(z) \eqdef az +b$ where $a$ in $\fqm\setminus \{0\}$ and
  $b$ in $\fqm$ it holds that
\[
\grs_{k}\big(\zeta(\xv),\yv\big) =  \grs_{k}(\xv,\yv).
\]
\end{proposition}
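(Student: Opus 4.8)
The plan is to unwind the definition of a generalized Reed--Solomon code and exploit the fact that precomposition by an invertible affine map is an automorphism of the space of polynomials of bounded degree. Write $\zeta(z) = az+b$ with $a \in \fqm\setminus\{0\}$. By definition,
\[
\grs_k\big(\zeta(\xv),\yv\big) = \Big\{ \big(y_1 f(\zeta(x_1)), \dots, y_n f(\zeta(x_n))\big) ~\mid~ f \in \F_{<k}[z] \Big\}.
\]
For $f \in \F_{<k}[z]$ put $g \eqdef f\circ\zeta$, i.e. $g(z) = f(az+b)$. Then the codeword of $\grs_k(\zeta(\xv),\yv)$ attached to $f$ is exactly $(y_1 g(x_1),\dots,y_n g(x_n))$, which is the codeword of $\grs_k(\xv,\yv)$ attached to $g$. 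So the claimed equality reduces to showing that $f\mapsto f\circ\zeta$ maps $\F_{<k}[z]$ \emph{onto} itself.

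This I would establish in two short steps. First, since $a\neq 0$ one has $\deg(f\circ\zeta)=\deg f$, so $f\circ\zeta \in \F_{<k}[z]$ whenever $f\in\F_{<k}[z]$; thus $f\mapsto f\circ\zeta$ is a well-defined self-map of $\F_{<k}[z]$. Second, $\zeta$ is an invertible affine map with affine inverse $\zeta^{-1}(z)=a^{-1}(z-b)$, so $g\mapsto g\circ\zeta^{-1}$ is likewise a well-defined self-map of $\F_{<k}[z]$ and is a two-sided inverse of $f\mapsto f\circ\zeta$. Hence $\{f\circ\zeta : f\in\F_{<k}[z]\} = \F_{<k}[z]$, and combining with the previous paragraph,
\[
\grs_k\big(\zeta(\xv),\yv\big) = \Big\{ \big(y_1 g(x_1),\dots,y_n g(x_n)\big) ~\mid~ g\in\F_{<k}[z]\Big\} = \grs_k(\xv,\yv).
\]
It is also worth noting in passing that $\zeta(\xv)$ still consists of $n$ pairwise distinct elements (because $\zeta$ is injective) and that $\yv$ has not changed, so the left-hand side is indeed a legitimate GRS code of dimension $k$.

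There is essentially no hard step here; the single point that must be highlighted is that the hypothesis $a\neq 0$ is precisely what is needed, both to preserve degrees (so that the map stays inside $\F_{<k}[z]$) and to make precomposition by $\zeta$ bijective rather than merely injective. Everything else is a direct translation between evaluation vectors and polynomials.
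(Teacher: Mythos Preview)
Your proof is correct. The paper actually states this proposition without proof, so there is no argument to compare against; your approach---observing that precomposition by an invertible affine map is a degree-preserving bijection of $\F_{<k}[z]$---is the standard and expected one.
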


\begin{remark}
  A consequence of Proposition~\ref{prop:affinetransform} is that it
  is possible to choose arbitrary values for two different coordinates
  $x_i$ and $x_j$ ($i \neq j$) provided that they are different.  For
  instance we may always assume that $x_1 = 0$ and $x_2 = 1$.

\smallskip

Another very important result from this proposition is that when an
affine map $\zeta(z) = az +b$ leaves globally invariant $\xv$ then it
induces a permutation $\sigma$ of $\sym_n$ thanks to the
identification:
 \[
 \forall i \in \rg{1}{n},  \; \; x_{\sigma(i)} \eqdef \zeta(x_i).  
  \]
  We call $\sigma$ the permutation \emph{induced} by the affine map
  $\zeta$. For the ease of notation we shall systematically identify
  $\sigma$ and $\zeta$.
\end{remark}

\paragraph{\bf Dyadic codes.}
A code
$\CC \subset \F^n$ is \emph{quasi-dyadic of order $2^\gamma$} where
$\gamma$ is a non-negative integer if there exists
$\G \subseteq \sym_n$ that is isomorphic to $\F_2^\gamma$ such that
\[
\forall (\sigma,\cv)  \in \G \times \CC, \;\; \cv^\sigma \in \CC.
\]

A construction of quasi-dyadic GRS and alternant codes is given in
\cite{FOPPT16}.  It considers $\gamma$ elements
$ b_1,\dots{},b_\gamma$ in $\fqm$ that are linearly independent over
$\F_2$.  The vector space $\oplus_{i=1}^\gamma \F_2 \cdot b_i$ generated
over $\F_2$ is then equal to a group $\G$. Next, it takes an $n_0$-tuple
$\tauv = (\tau_1\dots, \tau_{n_0})$ from $\fqm^{n_0}$ such that the
cosets $\tau_i + \G$ are pairwise disjoint, and finally it picks an
$n_0$-tuple $\yv = (y_1,\dots{},y_{n_0})$ composed of nonzero elements
from $\fqm$.
We now consider $\zv \eqdef \yv \otimes \onev_{2^\gamma}$ and
$\xv \eqdef \tauv \otimes \onev_{2^\gamma} + \onev_{n_0} \otimes \gv$
where $\gv \eqdef ( g )_{g \in \G}$.  The
action of $\G $ can then be described more explicitly: for any
$b$ in $\G$ we associate the translation defined for any $z $ in
$\fqm$ by $\sigma_b(z) \eqdef z + b$.  It is clear that $\sigma_b$
leaves globally invariant $\xv$ because we have
$\sigma_b (\G) = b + \G = \G$ and furthermore the following holds

\[
\sigma_b(\xv) = \xv
+ b \otimes \onev_{2^\gamma n_0} =
\tauv \otimes  \onev_{2^\gamma} +  \onev_{n_0} \otimes  \sigma_b(\gv).
\]

\begin{proposition}[\cite{FOPPT16}] \label{prop:QDGRS} Let $\K$ be a
  subfield of $\fqm$, $n_0$, $\gamma$, $\G$, $\yv$ and $\tauv$ defined
  as above and $n \eqdef 2^\gamma n_0$, $\gv \eqdef ( g )_{g \in \G}$,
  $\zv \eqdef \yv \otimes \onev_{2^\gamma}$ and
  $\xv \eqdef \tauv \otimes \onev_{2^\gamma} + \onev_{n_0} \otimes
  \gv$.  The codes $\grs_{r}(\xv,\zv) \subset \fqm^n$ and
  $\AC_{t}(\xv,\zv) \subset \K^n$ are quasi-dyadic of order
  $2^\gamma$.
 \end{proposition}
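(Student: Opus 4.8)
The plan is to combine Proposition~\ref{prop:affinetransform} with the observation, already recorded just before the statement, that for every $b\in\G$ the translation $\sigma_b\colon z\mapsto z+b$ leaves $\xv$ globally invariant and hence induces a permutation of $\sym_n$, still written $\sigma_b$, characterised by $x_{\sigma_b(i)}=x_i+b$ for all $i$. The first step is to check that $b\longmapsto\sigma_b$ is an injective group homomorphism from $(\G,+)$ into $\sym_n$: it is a homomorphism because composing the translations by $b$ and by $b'$ gives the translation by $b+b'$, and it is injective because $\sigma_b=\sigma_{b'}$ forces $x_i+b=x_i+b'$ for all $i$, hence $b=b'$. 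Since $\G$ is an $\F_2$-vector space of dimension $\gamma$, its image $\G'\eqdef\{\sigma_b : b\in\G\}\subseteq\sym_n$ is isomorphic to $\F_2^\gamma$, so it suffices to prove that $\grs_r(\xv,\zv)$ and $\AC_t(\xv,\zv)$ are globally invariant under every $\sigma_b$.

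The second step is the computational core. Because $\zv=\yv\otimes\onev_{2^\gamma}$, the coordinate $z_i$ equals $y_j$ as soon as $x_i$ belongs to the block $\tau_j+\G$; and since $\sigma_b(\tau_j+g)=\tau_j+(g+b)$ with $g+b\in\G$, the permutation $\sigma_b$ stabilises each of these blocks, which are pairwise disjoint by hypothesis. Hence $z_{\sigma_b(i)}=z_i$ for all $i$. Now take a codeword $\cv=\big(z_1f(x_1),\dots,z_nf(x_n)\big)\in\grs_r(\xv,\zv)$ with $\deg f<r$ and set $g\eqdef f\circ\sigma_b$, a polynomial of the same degree as $f$ because $\sigma_b$ is affine of degree one. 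Then
\[
\cv^{\sigma_b}=\big(z_{\sigma_b(1)}f(x_{\sigma_b(1)}),\dots,z_{\sigma_b(n)}f(x_{\sigma_b(n)})\big)=\big(z_1g(x_1),\dots,z_ng(x_n)\big),
\]
using $z_{\sigma_b(i)}=z_i$ and $f(x_{\sigma_b(i)})=f(x_i+b)=g(x_i)$; the right-hand side lies in $\grs_r(\xv,\zv)$. Thus $\grs_r(\xv,\zv)^{\sigma_b}\subseteq\grs_r(\xv,\zv)$, and equality follows by comparing dimensions. This settles the GRS case, and is in effect the proof of Proposition~\ref{prop:affinetransform} with the induced permutation made explicit.

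For the alternant code I would argue by duality rather than recomputing $\dual{\zv}$. For any linear code $\CC\subseteq\F^n$ and any $\sigma\in\sym_n$ one has $\dualp{\CC^\sigma}=\big(\dual{\CC}\big)^\sigma$, which follows from the identity $\scp{\cv^\sigma}{\zv}=\scp{\cv}{\zv^{\sigma^{-1}}}$; hence the $\sigma_b$-invariance of $\grs_t(\xv,\zv)$ just proved passes to its dual $\dual{\grs_t}(\xv,\zv)$. Since a coordinate permutation also preserves the condition of having all entries in $\K$, the subfield-subcode $\AC_t(\xv,\zv)=\dual{\grs_t}(\xv,\zv)\cap\K^n$ is $\sigma_b$-invariant too. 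Carrying this out for all $b\in\G$ and taking $\G'$ as the dyadic group then shows that both codes are quasi-dyadic of order $2^\gamma$. I do not anticipate a real obstacle: the only points needing a little care are that $f\circ\sigma_b$ does not raise the degree (true because $\sigma_b$ has degree one) and that $b\mapsto\sigma_b$ genuinely embeds into $\sym_n$, both of which are immediate once the block decomposition of $\xv$ is in hand.
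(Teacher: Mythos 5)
Your proof is correct. The paper states this proposition without proof (it is imported from \cite{FOPPT16}), and your argument is precisely the one set up in the discussion preceding the statement: each translation $\sigma_b$ leaves $\xv$ globally invariant and induces a coordinate permutation fixing $\zv$ blockwise, GRS invariance follows by replacing $f$ with $f(\cdot+b)$ of the same degree, and the alternant case follows since permutations commute with taking duals and with intersecting with $\K^n$.
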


\begin{example}
  Let us take $n_0 = 2$ and $\gamma = 2$ then
  $\gv = (0, b_1, b_2, b_1 + b_2)$ and
  $\xv = ( \tau_1, \tau_1 + b_1, \tau_1 + b_2, \tau_1 + b_1 + b_2) || (
  \tau_2, \tau_2 + b_1, \tau_2 + b_2, \tau_2 + b_1 + b_2 )$. 
  The group $\G$ is then equal to $\{0,b_1,b_2,b_1+b_2\}$. We have $\sigma_{b_1}(\gv) = (b_1,0,b_2+b_1,b_2)$ and the
  permutation that corresponds to $b_1$ is  $(1 2) (3 4) (5 6) (7 8)$ in the
  canonical cycle notation.  \qed
\end{example}

\paragraph{\bf DAGS scheme.}
The public key encryption scheme DAGS \cite{BBB+18,webDAGS} submitted
to the NIST call for post quantum cryptographic proposals is a
McEliece-like scheme with a conversion to a KEM. It relies on
quasi-dyadic alternant codes
$\AC_{r} =\dual{\grs}_r(\xv,\zv) \cap \fq^n$ with $q = 2^s$ with
$\dual{\grs}_r(\xv,\zv) \subset \fqq^n$ ($m = 2$). The public code
$\AC_{r} $ is quasi-dyadic of order $2^\gamma$ where $\gamma \geq 1$.
The parameters are chosen such that $r = 2^\gamma r_0$ and
$n = 2^\gamma n_0$, and the dimension is $k = 2^\gamma k_0$ with
$k_0 \eqdef n_0 - 2 r_0$.

\medskip

Keeping up with the notation of Proposition~\ref{prop:QDGRS} the
vectors $\xv$ and $\zv$
can be written
as
$\xv \eqdef \tauv \otimes \onev_{2^\gamma} + \onev_{n_0} \otimes \gv$
and $\zv \eqdef \yv\otimes \onev_{2^\gamma}$ with
$\gv = ( g )_{g \in \G}$ where $\G = \oplus_{i=1}^{\gamma} \F_2 b_i$
is the vector space generated over $\F_2$ by $\gamma$ elements
$\bv = (b_1,\dots{},b_\gamma)$ that are linearly independent over
$\F_2$.  The quantities $\bv$ and $\yv$ are randomly drawn from
$\fqq^{n_0}$ such that the cosets $\tau_i + \G$ are pairwise disjoint
and $\yv$ is composed of nonzero elements in $\fqq$.

The public key is then an $(n-k)\times n$ parity check matrix $\Hpub$ of  $\AC_{r}$.
The quantities $(\bv, \tauv,\yv)$  have to be kept secret since 
they permit to decrypt any ciphertext. Table \ref{tab:dagsparam} gathers the parameters of the scheme. 
\begin{table}[h]
        \begin{center}
            \caption{DAGS-1, DAGS-3 and DAGS-5 correspond to the initial parameters (v1). When the algebraic attack \cite{BC18}  
appeared  the authors updated to DAGS-1.1, DAGS-3.1 and DAGS-5.1 (v2). }
            \label{tab:dagsparam}
            \begin{tabular}{@{}lcc*{5}{r}@{}} \toprule 
                Name & Security & $q$ & $m$ & $2^\gamma$ & $n_{0}$ & $k_{0}$ & $r_0 $ \\ 
                \midrule
                DAGS-1 & $128$ & $2^5$ & $2$ & $2^4$ & $52$  & $26$ & $13$ \\ 
                DAGS-3 & $192$ & $2^6$ & $2$ & $2^5$ & $38$ & $16$ & $11$ \\ 
                DAGS-5 & $256$ & $2^6$ & $2$ & $2^6$ & $33$ & $11$ & $11$ \\ \midrule
                DAGS-1.1 & $128$ & $2^6$ & $2$ & $2^4$ & $52$  & $26$ & $13$ \\
      		DAGS-3.1 & $192$ & $2^8$ & $2$ & $2^5$ & $38$ & $16$ & $11$ \\
	        DAGS-5.1 & $256$ & $2^8$ & $2$ & $2^5$ & $50$ & $28$ & $11$\\
                \bottomrule
            \end{tabular}
        \end{center}
    \end{table}

\section{Component-wise Product of Codes} \label{sec:prod}

An important property about GRS codes  is that
whenever $\av$ belongs to  $\grs_{r}(\xv,\yv)$  and 
$\bv$ belongs  to  $\grs_{t}(\xv,\zv)$, the component wise  product
$\av \star \bv   \eqdef (a_1b_1,\dots{},a_nb_n)$ belongs to $\grs_{t+r-1}(\xv,\yv\star \zv)$.  
Furthermore, if one defines the  component wise product  $\AC \star \BC$ of
 two linear codes $\AC \subset \F^n$ and $\BC \subset \F^n$  as the linear code
spanned by all the products $\av \star \bv$ with $\av$ in $\AC$ and $\bv$ in $\BC$ 
the inclusion is then an equality 
\begin{equation}
\grs_r(\xv,\yv) \star  \grs_t(\xv,\zv) = \grs_{r+t-1}(\xv,\yv \star \zv).
\end{equation}
In the case of alternant codes over a subfield $\K \subseteq \F$, one only gets in general the inclusion  
\begin{equation}
\left(\grs_{n-r}(\xv,\dual{\yv}) \cap \K^n \right) \star  \left(\grs_t(\xv,\onev_n) \cap \K^n\right) \subseteq \grs_{n-r+t-1}(\xv,\dual{\yv}) \cap \K^n 
\end{equation}
which leads to the following result.
\begin{proposition}[\cite{BC18}]  For any integer $r \geq 1$ and $t \geq 1$ the alternant codes 
 $\AC_{r+t-1}(\xv,\yv)$ and $\AC_{r}(\xv,\yv)$ over $\K \subsetneq \F$ where 
 $\xv$ is an $n$-tuple of distinct elements  from a finite field $\F$ 
 and  $\yv$ is an $n$-tuple  of non-zero elements from $\F$ satisfy the inclusion
\begin{equation}
\AC_{r+t-1}(\xv,\yv) \star  \left(\rs_t(\xv) \cap \K^n\right) \subseteq 
\AC_{r}(\xv,\yv). 
\end{equation}
\end{proposition}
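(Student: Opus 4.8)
The plan is to unfold the definition of an alternant code as the $\K$-rational part of a GRS code, apply the multiplicativity of GRS codes under the star product recalled just above, and then observe that passing to the subfield $\K$ is compatible with $\star$ in the direction we need. Since $\AC_{r+t-1}(\xv,\yv)\star\bigl(\rs_t(\xv)\cap\K^n\bigr)$ is by definition spanned by the products $\av\star\bv$ with $\av\in\AC_{r+t-1}(\xv,\yv)$ and $\bv\in\rs_t(\xv)\cap\K^n$, and $\AC_r(\xv,\yv)$ is a linear code, it suffices to prove that each such product $\av\star\bv$ lies in $\AC_r(\xv,\yv)$; the general inclusion then follows formally.

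First I would record that, by definition, $\AC_{r+t-1}(\xv,\yv)=\grs_{n-r-t+1}(\xv,\dual{\yv})\cap\K^n$ and $\rs_t(\xv)\cap\K^n=\grs_t(\xv,\onev_n)\cap\K^n$. Fixing $\av$ and $\bv$ as above, the membership $\av\in\grs_{n-r-t+1}(\xv,\dual{\yv})$ together with $\bv\in\grs_t(\xv,\onev_n)$ gives, via the equality $\grs_a(\xv,\uv)\star\grs_b(\xv,\vv)=\grs_{a+b-1}(\xv,\uv\star\vv)$ recalled in this section, that $\av\star\bv\in\grs_{(n-r-t+1)+t-1}(\xv,\dual{\yv}\star\onev_n)=\grs_{n-r}(\xv,\dual{\yv})$. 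Separately, $\av$ and $\bv$ have all coordinates in the field $\K$, which is closed under multiplication, so $\av\star\bv\in\K^n$. Intersecting, $\av\star\bv\in\grs_{n-r}(\xv,\dual{\yv})\cap\K^n=\AC_r(\xv,\yv)$, as desired.

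The computation itself is elementary, so the only points that warrant care are the degenerate degree ranges and the reason this is merely an inclusion. If $r+t-1\ge n$ then $\AC_{r+t-1}(\xv,\yv)=\{0\}$ and the statement is vacuous, and if $r\ge n$ one checks both sides are $\{0\}$; hence one may assume $1\le r$ and $r+t-1<n$, which is exactly the range in which the GRS star-product identity is being invoked with positive parameters $a=n-r-t+1\ge 1$ and $b=t\ge1$, and $a+b-1=n-r\le n-1$. I would also emphasise the asymmetry that makes the conclusion an inclusion rather than an equality: a product of two $\K$-rational codewords is again $\K$-rational, but a $\K$-rational codeword of $\grs_{n-r}(\xv,\dual{\yv})$ need not be a linear combination of such products, so one cannot expect $\AC_r(\xv,\yv)$ to be generated by the products $\av\star\bv$ — unlike in the full GRS setting over $\F$. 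This is the one conceptual subtlety; everything else is bookkeeping.
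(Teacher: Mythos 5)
Your proof is correct and follows essentially the same route as the paper, which deduces the proposition directly from the displayed inclusion $\left(\grs_{n-r}(\xv,\dual{\yv}) \cap \K^n \right) \star \left(\grs_t(\xv,\onev_n) \cap \K^n\right) \subseteq \grs_{n-r+t-1}(\xv,\dual{\yv}) \cap \K^n$, i.e.\ by unfolding the alternant codes as $\K$-rational parts of GRS codes, invoking the GRS star-product degree bound, and noting that $\K^n$ is closed under coordinatewise products. Your extra remarks on degenerate degree ranges and on why the result is only an inclusion are sound but not needed beyond what the paper's one-line derivation already contains.
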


The previous result is really interesting when $\rs_t(\xv) \cap \K^n$ is not the (trivial) code generated by 
$\onev_n$. 
This happens for instance  for $\F = \fqm$ and $\K = \fq$ when $t = q^{m-1} +1$ because 
$\rs_t(\xv) \cap \fq^n$ then always contains at least $\onev_n$ and  $\tr_{\fqm/\fq}(\xv) \eqdef  \left(\tr_{\fqm/\fq}(x_1),\dots{},\tr_{\fqm/\fq}(x_n)\right)$.
Actually one can observe that    $\tr_{\fqm/\fq}(\alpha \xv) $ also always belongs to
$\rs_t(\xv) \cap \fq^n$ for all $\alpha $ in $\fqm$.
Hence if $\{1,\omega_1,\dots{},\omega_{m-1}\}$ form an $\fq$-basis of $\fqm$ then 
$\alpha$ can be written as $\alpha_0 + \alpha_1 \omega_1 +\cdots{} + \alpha_{m-1} \omega_{m-1}$ with  $\alpha_0,\dots{},\alpha_{m-1}$ in $\fq$, and consequently with the convention that $\omega_0 \eqdef 1$
one has that
\[
\tr_{\fqm/\fq}(\alpha \xv)  =  \sum_{i=0}^{m-1}\alpha_i\tr_{\fqm/\fq}(\omega_i\xv).
\] 
This implies that $\dim_{\fq}  \rs_t(\xv) \cap \fq^n  \geq m+1$ when $t = q^{m-1}+1$.
Another interesting case is when $t = \frac{q^m-1}{q-1} +1$ 
then $\nr_{\fqm/\fq}(\xv) \eqdef  \left(\nr_{\fqm/\fq}(x_1),\dots{},\nr_{\fqm/\fq}(x_n) \right)$ 
belongs to $\rs_t(\xv) \cap \fq^n$, and one would get that
$\dim_{\fq}  \rs_t(\xv) \cap \fq^n  \geq m+2$.

\section{Algebraic Cryptanalysis}\label{sec:attack}

We present the ciphertext-only attack of \cite{BC18} that recovers the
private key of DAGS scheme. We refer to Section~\ref{sec:prelim} for
the notation. The public key is a parity-check matrix $\Hpub$ of a
quasi-dyadic alternant code $\AC_r$.  The attack recovers the secret
values $\bv = (b_1,\dots{},b_{\gamma})$,
$\tauv =(\tau_1,\dots{},\tau_{n_0})$ and
$ \yv = (y_1,\dots{},y_{n_0})$. The idea is to exploit the fact that
\begin{equation} \label{ARA} \AC_{r+t-1}(\xv,\yv\otimes
  \onev_{2^\gamma}) \star \left(\rs_{t}(\xv) \cap \fq^n\right)
  \subseteq \AC_{r}(\xv,\yv\otimes \onev_{2^\gamma}).
\end{equation}
where $\xv = \tauv \otimes \onev_{2^\gamma} + \onev_{n_0} \otimes \gv$
with $\gv = ( g )_{g \in \G}$ and
$\G = \oplus_{i=1}^{\gamma} \F_2 b_i$.
Because the secret vector $\yv$ is not anymore involved in the
definition of $\rs_{t}(\xv)$ an attacker gains a real advantage if she
manages to identify the codewords that are contained in
$\rs_{t}(\xv) \cap \fq^n$, especially when $t \geq q+2$ (see
\cite{BC18} for more details).
The attack of \cite{BC18} introduces the \emph{invariant code
  $\inv{\AC}{\G}_{r}$ with respect to $\G$ of $\AC_r$} which is
defined as
\[
  \inv{\AC}{\G}_{r} \eqdef \Big\{ (c_1,\dots{},c_n) \in \AC_r ~ \mid~
  \forall (i,j) \in \rg{0}{n_0-1} \times \rg{1}{2^\gamma},\;\; c_{i
    2^\gamma + j} = c_{i 2^\gamma + 1} \Big\}.
\]
The dimension of $\inv{\AC}{\G}_{r}$ is equal to $k_0$ (see
\cite{B17,BC18}). The cryptanalysis relies then on finding two vector
spaces $\DC$ and $\NC$ such that the constraints given in
\eqref{polsys} hold
\begin{equation} \label{polsys}
\left \{
\begin{array}{rcl}
\DC &\subsetneq & \inv{\AC}{\G}_{r}, \\
\dim \DC &=& k_0 -  c ~~\text{ where  } 
c \eqdef \frac{mq^{m-1}}{2^{\gamma}} = \frac{q}{2^{\gamma-1}},\\
\DC \star\NC & \subseteq  & \AC_r.
\end{array}
\right.
\end{equation}
Let us recall that $\AC_{r}$ and $\inv{\AC}{\G}_{r}$ are known,
especially it is simple to compute a generator matrix $\Ginv$ of
$\inv{\AC}{\G}_{r}$.  Since $\DC \subsetneq\inv{\AC}{\G}_{r}$ and
$\dim \DC = k_0 - c$ there exists a $(k_0 - c) \times k_0$ matrix
$\Km$ such that $\Km \Ginv$ generates $\DC$.  On the other hand $\NC$
necessarily satisfies the inclusion\footnote{It was observed
  experimentally in \cite{BC18} that actually the inclusion is most of
  the time an equality.}
$\NC \subseteq \dual{\Big(\DC \star \dual{\AC}_{r} \Big)}$ and
consequently $\big (\Km \Ginv \big) \star \Hpub $ is a parity check
matrix of $\NC$.  We are now able to state (without proof) an
important result justifying the interest of this approach.

\begin{theorem}[\cite{BC18}] \label{sol2polsys} Let us assume that
  $\card{\G} \leq q$.  Let $\DC$ be the invariant code of
  $\AC_{r+q}(\xv,\zv)$ and let $\NC$ be the vector space generated
  over $\fq$ by $\onev_n$, $\tr_{\fqq/\fq}(\xv) $
  $\tr_{\fqq/\fq}(\omega \xv) $ and $\nr_{\fqq/\fq}(\xv) $ where
  $\{1,\omega \}$ is an $\fq$-basis of $\fqq$.  Then $\DC$ and $\NC$
  are solution to \eqref{polsys}.
\end{theorem}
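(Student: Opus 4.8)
The plan is to verify the three conditions of \eqref{polsys} separately for the announced pair $(\DC,\NC)$, with $\DC = \inv{\AC}{\G}_{r+q}(\xv,\zv)$ and $\NC = \langle \onev_n, \tr_{\fqq/\fq}(\xv), \tr_{\fqq/\fq}(\omega\xv), \nr_{\fqq/\fq}(\xv)\rangle$. The inclusion $\DC \subsetneq \inv{\AC}{\G}_r$ follows from the monotonicity of alternant codes recalled in the Remark after the definition, namely $\AC_{r+q}(\xv,\zv) \subseteq \AC_r(\xv,\zv)$ (here the extra degree is $t-1 = q$): intersecting with the same subfield and imposing the same dyadic-invariance constraints preserves the inclusion, so $\inv{\AC}{\G}_{r+q} \subseteq \inv{\AC}{\G}_r$, and the inclusion is strict because the dimensions differ. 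For the dimension count, I would invoke the fact quoted in the text that the invariant code of $\AC_s$ has dimension $s_0$ where the degree is $s = 2^\gamma s_0$; applied to $s = r+q$ we need $r+q = 2^\gamma(r_0 + q/2^\gamma)$, i.e. $q/2^\gamma$ must be an integer, which holds under $\card{\G} = 2^\gamma \le q$. Then $\dim \DC = r_0 + q/2^\gamma$... wait, one must be careful: the invariant code of $\AC_s$ has dimension $n_0 - m s_0 = n_0 - 2(r_0 + q/2^\gamma) = k_0 - 2q/2^\gamma = k_0 - q/2^{\gamma-1} = k_0 - c$, exactly as required since $m = 2$ and $c = q/2^{\gamma-1}$.

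The substantive part is the product condition $\DC \star \NC \subseteq \AC_r(\xv,\zv)$. Here I would use the key inclusion \eqref{ARA}, or rather its refinement: by Proposition~\cite{BC18} on products, $\AC_{r+q}(\xv,\zv) \star (\rs_{q+1}(\xv)\cap\fq^n) \subseteq \AC_r(\xv,\zv)$ (taking $t = q+1$ so that $r+t-1 = r+q$). It therefore suffices to check two things: first, that each generator of $\NC$ lies in $\rs_{q+1}(\xv)\cap\fq^n$; second, that starring with an element of $\DC \subseteq \inv{\AC}{\G}_{r+q}$ rather than all of $\AC_{r+q}$ still lands inside $\AC_r$ — which is automatic since $\DC \subseteq \AC_{r+q}(\xv,\zv)$. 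For the first point, $\onev_n$ is the evaluation of the constant polynomial $1$, of degree $0 < q+1$; the coordinates of $\tr_{\fqq/\fq}(\xv)$ are $x_i + x_i^q$, the evaluation of $z + z^q$, of degree $q < q+1$; likewise $\tr_{\fqq/\fq}(\omega\xv)$ has coordinates $\omega x_i + \omega^q x_i^q = \omega x_i + (\omega x_i)^q$ if one absorbs $\omega$... more carefully, $\tr_{\fqq/\fq}(\omega x_i) = \omega x_i + (\omega x_i)^q = \omega x_i + \omega^q x_i^q$, still the evaluation of the degree-$q$ polynomial $\omega z + \omega^q z^q$; and $\nr_{\fqq/\fq}(\xv)$ has coordinates $x_i^{q+1}$, the evaluation of $z^{q+1}$, of degree exactly $q+1$, which is $< q+1$... no: $\deg z^{q+1} = q+1$, which is \emph{not} $< q+1$. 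This is the place to be careful: $\rs_{q+1}(\xv)$ as defined uses polynomials of degree $< q+1$, i.e. degree $\le q$, so $z^{q+1}$ is excluded. One must instead take $t = q+2$, so that $\rs_{q+2}(\xv)$ contains evaluations of degree $\le q+1$ and hence contains $\nr_{\fqq/\fq}(\xv)$; then $r+t-1 = r+q+1$, and one should work with $\DC = \inv{\AC}{\G}_{r+q+1}$. I would re-examine the exact indexing against the statement — the theorem says ``$\DC$ is the invariant code of $\AC_{r+q}$'', so presumably their convention for $\AC_{r+q}$ and $\rs$ degrees differs by one from mine, or the norm term is handled by the slightly larger $t = \frac{q^m-1}{q-1}+1 = q+2$ case mentioned in Section~\ref{sec:prod}; either way the bookkeeping must be reconciled so that every generator of $\NC$ genuinely evaluates a polynomial of degree $< t$ with $r+t-1$ equal to the degree of the alternant code defining $\DC$.

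The remaining check is that every generator of $\NC$ indeed lies in $\fq^n$ and not merely in $\fqq^n$ — immediate, since traces and norms from $\fqq$ to $\fq$ take values in $\fq$, and $\onev_n \in \fq^n$. Finally one should note $\NC \ne \langle \onev_n\rangle$, so that the attack is non-degenerate; this uses that $\xv$ has distinct coordinates spread over cosets of $\G$, so $\tr_{\fqq/\fq}(\xv)$ is not constant (generically). The main obstacle, as flagged above, is purely the degree bookkeeping between the $\rs$-degree convention (polynomials of degree $<t$), the alternant-degree shift $r+t-1$, and the precise indices $r+q$ versus $r+q+1$ needed to accommodate the norm coordinate of degree $q+1$; once that is pinned down, the three conditions of \eqref{polsys} reduce to one-line verifications using the product inclusion and the dimension formula for invariant codes both quoted earlier.
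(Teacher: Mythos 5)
Note first that the paper itself states this theorem without proof (it is quoted from \cite{BC18}), so your attempt can only be judged on its own merits. Your treatment of the first two conditions of \eqref{polsys} is fine: monotonicity of alternant codes gives $\inv{\AC}{\G}_{r+q}\subseteq \inv{\AC}{\G}_{r}$, and the dimension count $k_0-2(r_0+q/2^\gamma)=k_0-c$ is the right bookkeeping (modulo the fact that the paper only quotes the invariant-code dimension for $\AC_r$, so strictly you need the folding result of \cite{FOPPT16,B17} for $\AC_{r+q}$ as well — a minor point).

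The genuine gap is in the third condition, exactly at the place you flag and then leave unresolved. You correctly observe that $\nr_{\fqq/\fq}(\xv)$ is the evaluation of $z^{q+1}$, hence lies in $\rs_{q+2}(\xv)\cap\fq^n$ but not in $\rs_{q+1}(\xv)$, so the product proposition with $t=q+1$ does not cover the norm generator, and with $t=q+2$ it would only give $\AC_{r+q+1}\star\NC\subseteq\AC_r$ — which is a different $\DC$ from the one in the statement. You then stop at ``the bookkeeping must be reconciled,'' suspecting a convention shift; but the statement is correct as written, and the missing idea is that the invariance of $\DC$ must be used in the product condition, precisely where you declare it ``automatic since $\DC\subseteq\AC_{r+q}$.'' Concretely: a codeword of $\AC_{r+q}(\xv,\zv)$ is $(\dual{z}_if(x_i))_i$ with $\deg f\le n-r-q-1$, and since $\dual{\zv}$ is constant on dyadic blocks, invariance under every translation $\sigma_b$, $b\in\G$, forces $f(z+b)=f(z)$ as polynomials, i.e.\ $f=F(\psi(z))$ with $\psi(z)=\prod_{b\in\G}(z-b)$, so $\deg f$ is a multiple of $2^\gamma$. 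Because $\card{\G}=2^\gamma$ divides $q$ (both are powers of two and $2^\gamma\le q$ — this is where the hypothesis enters), $n-r-q$ is a multiple of $2^\gamma$, hence the bound improves to $\deg f\le n-r-q-2^\gamma$, and then $\deg\bigl(f(z)z^{q+1}\bigr)\le n-r-2^\gamma+1<n-r$, so $\DC\star\nr_{\fqq/\fq}(\xv)\subseteq\grs_{n-r}(\xv,\dual{\zv})\cap\fq^n=\AC_r(\xv,\zv)$. Equivalently, this argument shows $\inv{\AC}{\G}_{r+q}=\inv{\AC}{\G}_{r+q+1}$, so your $t=q+2$ route does apply to the stated $\DC$; but establishing that equality (or the degree-multiple argument above) is exactly the content your write-up omits, so as it stands the verification of $\DC\star\NC\subseteq\AC_r$ is incomplete for the norm generator.
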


\begin{remark}
  Considering now the vector space generated by $\NC$ over $\fqq$ one
  can see that $\xv$ is also solution to \eqref{polsys} using this
  simple identity
\[
\xv = ( \omega^q - \omega) ^{-1} \left ( \omega^q \tr_{\fqq/\fq}(\xv) - \tr_{\fqq/\fq}(\omega \xv) \right).
\]
\end{remark}
The algebraic attack of \cite{BC18} recovers $\DC$ and $\NC$
satisfying \eqref{polsys} by introducing two sets of variables
$\Vm= (V_1,\dots{},V_n)$ and $\Km = (K_{i,j}) $ with
$i \in \rg{1}{k_0 - c}$ and $j \in \rg{1}{k_0}$ that satisfy the
multivariate quadratic system
 \[
 \big(\Km \Ginv \big) \star \Hpub  \cdot \Vm^T = \zerov.
 \]
 The number of variables of this system can be very high which is a
 hurdle to solving it efficiently in practice.  However this algebraic
 system does not take into account three observations that enable us
 to significantly reduce the number of variables.

\begin{itemize}
\item We know by Theorem~\ref{sol2polsys} that $\tr_{\fqq/\fq}(\xv) $
  (and $\xv$) are solution to \eqref{polsys} which means that we may
  assume that $\Vm$ has a ``quasi-dyadic'' structure.  We define two
  sets of variables $\Tm = (T_1,\dots,T_{n_0})$ and
  $\Bm = (B_1,\dots,B_\gamma)$ so that we can write
  $\Vm = \Tm \otimes \onev_{2^\gamma} + \onev_{n_0}\otimes
  \orbitp{\Bm}$ where
  $\orbit{\Bm} \eqdef \oplus_{i=1}^{\gamma} \F_2 B_i$ is the vector
formed by all the elements in the $\F_2$-vector space generated by
  $\Bm$. More precisely, $\orbit{(B_1)} = (0, B_1)$ and by induction,
  $\orbit{(B_1,\dots,B_i)} = \orbit{(B_1,\dots,B_{i-1})} ||
  \left(B_i\otimes \onev_{2^\gamma} +
    \orbit{(B_1,\dots,B_{i-1})}\right)$. For instance,

  $\orbit{(B_1, B_2, B_3)} = (0, B_1, B_2, B_1 + B_2, B_3, B_1 + B_3, B_2 + B_3, B_1 + B_2 + B_3)$.
  \medskip

\item Thanks to the shortening of $\DC$ and the puncturing of $\NC$ we
  are able to even more reduce the number of unknowns because for any
 $I \subset \rg{1}{n}$ it holds
\begin{equation} \label{shortalsy}
\sh{I}{\DC} \star \pu{I}{\NC} \subseteq \sh{I}{\AC_r}.
\end{equation}

\medskip

\item Lastly, if the first $(k_0 - c)$ columns of $\Km$ form an
  invertible matrix $\Sm$, we can then multiply the polynomial system
  by $\Sm^{-1}$ without altering the solution set. Therefore we may
  assume that the first columns of $\Km$ forms the identity matrix, i.e. $\Km = 	\begin{pmatrix} \Im_{d} & \Um  \end{pmatrix}$. Of
  course this observation also applies when considering the polynomial
  system defined in \eqref{shortalsy}.
 \end{itemize}

\medskip

The algebraic attack harnesses \eqref{shortalsy} by first picking a
set $I \subset \rg{1}{n}$ of cardinality $2^\gamma a_0$ such that $I$
is the union of $a_0$ disjoint dyadic blocks.  The different steps are
described below:
\begin{enumerate}
	
\item \label{step:solve} Recover $\sh{I}{\DC}$ and $\pu{I}{\NC}$ by
  solving the quadratic system
      \begin{equation} \label{eq:algebraicmodeling}
	\begin{pmatrix} \Im_{d} & \Um  \end{pmatrix} \sh{I}{\Ginv}  \star \pu{I}{\Hpub}  \cdot \pu{I}{\Vm}^T = \zerov
	\end{equation}
	where $d \eqdef \dim \sh{I}{\DC} = k_0 -c - a_0$ and
    $\Vm = \Tm \otimes \onev_{2^\gamma} + \onev_{n_0}\otimes
    \orbitp{\Bm}$.
	
	\medskip
	
  \item \label{step:pix} Reconstruct $\pu{I}{\xv}$ from $\sh{I}{\DC}$
    and $\pu{I}{\NC}$.
	
	\medskip
	
	\item \label{step:xy} Recover $\pu{I}{\yv}$ then $\xv$ and $\yv$ using linear algebra.
\end{enumerate}

\begin{remark}
Because of the particular form of the generator matrix
of $\sh{I}{\DC}$ in \eqref{eq:algebraicmodeling}, the system may have
only a trivial solution.  In other words, the polynomial system will
provide $\sh{I}{\DC}$ if it admits a generator matrix such that its
first $d$ columns form an invertible matrix, which holds with
probability $\prod_{i = 1}^{d} \left ( 1 - q^{-i} \right)$. 
\end{remark}

\begin{remark}\label{rem:dual}
  The attack searches for a code $\DC \subset \inv{\AC_r}{\G}$ of dimension $\k0-\codim$. When $\codim \ge \k0$, like DAGS-3.1 where $\k0 = c =16$, 
  this code does not exist. But it is possible to use $\dual{\AC_r}$ instead of $\AC_r$ to search for a code $\EC \subset \inv{(\dual{\AC_r})}{\G}$ of dimension $\n0-\k0-\codim$ such that
  \begin{equation}
    \label{eq:dual}    
 \EC \star \NC \subset \dual{\AC_r}
\end{equation}
We will not present this version here because it does not give practical improvements. 
\end{remark}

The authors in~\cite{BC18} addressed Steps~\ref{step:pix}
and~\ref{step:xy} by showing that it relies only on linear algebra
with matrices of size at most $n$ and can be done in $O(n^{3})$ operations.  
In this paper, we are mainly interested in Step~\ref{step:solve} and the
computation of a Gr\"obner basis of \eqref{eq:algebraicmodeling}.

\paragraph{\bf Solving \eqref{eq:algebraicmodeling} using Gr\"obner bases.}
Using Prop.~\ref{prop:affinetransform} and the fact that an affine map
preserves the quasi-dyadic structure, we can assume that $b_1=1$ and
$\tau_{n_0}=0$. Moreover, any vector in the code $\pu{I}{\NC}$ is a
solution to the system, in particular
$\tr_{\fqq/\fq}(b_2)^{-1}\tr_{\fqq/\fq}(\pu{I}{\xv})$, so we know that
a solution with $B_2=1$ exists.  

\begin{remark} \label{rem:inv_b2}
  Note that $\tr_{\fqq/\fq}(b_2)$ is not invertible if $b_2\in\fq$,
  which arises with probability $\frac{1}{q}$, but in this case a
  solution with $B_2=0, B_3=1$ exists in the system and we may
  specialize one more variable. 
\end{remark}

The following theorem identifies in \eqref{eq:algebraicmodeling} the number of equations and
variables.
\begin{theorem}\label{theo:numbereq}
If $\Ginv$ and $\Hpub$ are in systematic form and $I=\rg{1}{\a02^\gamma}$  then the polynomial system~\eqref{eq:algebraicmodeling} with $T_{\n0}=0, B_1=0, B_2=1$ contains
  \begin{equation}
  \begin{cases}
\nvarsU =  \left(\k0 - \codim - \a0\right) \codim & \text{ variables in } \Um\\
\nvarsT = \n0 -\k0 + \codim -1  &\text{ variables  }  T_{\k0-\codim+1},\dots,T_{\n0-1}\\
\nvarsB = \gamma - 2  &\text{ variables  } B_3,\dots, B_\gamma \\
\left(\k0 - \codim - \a0 \right)(\n0-\k0 -1) & \text{ quadratic equations}
\end{cases}\label{eq:nbvarsa0}
\end{equation}
that are bilinear in the variables $\Um$ and the variables in $\Vm$,
as well as $ \k0-\codim - \a0$ equations of the form
\[
T_i = P_i(\Um[i], T_{\n0-\k0+1},\dots,T_{\n0-1}, \Bm), \quad i \in \rg{1}{\k0-\codim-\a0}\]
where $P_i$ is a bilinear polynomial in the variables $\Um$ and $\Vm$.
\end{theorem}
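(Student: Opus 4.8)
The plan is to unpack the matrix equation~\eqref{eq:algebraicmodeling} entry by entry and count. First I would fix the shapes: since $\Ginv$ is an $k_0\times n_0$ generator matrix of $\inv{\AC}{\G}_r$ (here identifying a quasi-dyadic generator matrix with its ``folded'' $n_0$-column form), shortening at $I=\rg{1}{\a02^\gamma}$ removes $\a0$ blocks, leaving a $d\times(n_0-\a0)$ matrix $\sh{I}{\Ginv}$ with $d=k_0-c-\a0$; putting it in systematic form writes it as $\begin{pmatrix}\Im_d & \Um\end{pmatrix}$ where $\Um$ is $d\times(n_0-\a0-d)=d\times(n_0-k_0+c)$ — hmm, I need to double check this against the claimed $\nvarsU=(k_0-c-\a0)c$, which suggests instead that $\Ginv$ has $k_0$ columns and the ``free'' block has width $c$ after accounting for the identity and for the fact that $\sh{I}{\DC}\subset\sh{I}{\inv{\AC}{\G}_r}$ sits inside a $(k_0-\a0)$-dimensional ambient space; so $\Um$ is $(k_0-c-\a0)\times c$, giving $\nvarsU=(k_0-c-\a0)c$. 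Similarly $\pu{I}{\Hpub}$ is (in folded form) a $(n_0-k_0)\times(n_0-\a0)$ parity-check matrix of $\pu{I}{\AC_r}$, which in systematic form I may take to have its last $n_0-k_0$ block-columns equal to $\Im_{n_0-k_0}$.

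Next I would substitute the quasi-dyadic ansatz $\Vm=\Tm\otimes\onev_{2^\gamma}+\onev_{n_0}\otimes\orbitp{\Bm}$, so that $\pu{I}{\Vm}$ is parametrized by $T_{\a0+1},\dots,T_{n_0}$ and $B_1,\dots,B_\gamma$; imposing $T_{n_0}=0$, $B_1=0$, $B_2=1$ (justified by Prop.~\ref{prop:affinetransform} and Remark~\ref{rem:inv_b2}) kills three of these, and the $T_i$ for $i\le k_0-c-\a0$ will turn out to be \emph{determined} rather than free, leaving exactly the $\nvarsT=n_0-k_0+c-1$ free $T$-variables $T_{k_0-c+1},\dots,T_{n_0-1}$ and the $\nvarsB=\gamma-2$ free $B$-variables $B_3,\dots,B_\gamma$. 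Then I would expand the product $\begin{pmatrix}\Im_d&\Um\end{pmatrix}\sh{I}{\Ginv}\star\pu{I}{\Hpub}$: each row of the left factor is a length-$(n_0-\a0)$ vector whose first $d$ block-entries, after the star with a row of $\pu{I}{\Hpub}$ and contraction with $\pu{I}{\Vm}^T$, produces one scalar equation per (row of $\sh{I}{\Ginv}$, row of $\pu{I}{\Hpub}$) pair — that is $d\times(n_0-k_0)$ equations total. The key bookkeeping step is splitting these according to the systematic block structure of $\pu{I}{\Hpub}$: the $d\cdot 1$ equations coming from... no — rather, because $\pu{I}{\Hpub}$ in systematic form has an identity block of width $n_0-k_0$, exactly $k_0-c-\a0 = d$ of the $d(n_0-k_0)$ equations isolate a single variable $T_i$ (the coordinate of $\pu{I}{\Vm}$ indexed by that identity column), yielding the stated relations $T_i=P_i(\Um[i],T_{n_0-k_0+1},\dots,T_{n_0-1},\Bm)$ with $P_i$ bilinear in $\Um$ and $\Vm$; the remaining $d(n_0-k_0)-d=(k_0-c-\a0)(n_0-k_0-1)$ equations are the genuinely quadratic (bilinear in $\Um$ versus $\Vm$) ones.

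Finally I would verify bilinearity: every entry of $\begin{pmatrix}\Im_d&\Um\end{pmatrix}\sh{I}{\Ginv}$ is affine-linear in $\Um$, every entry of $\pu{I}{\Vm}$ is affine-linear in $(\Tm,\Bm)$ (the $\orbitp{\Bm}$ construction is $\F_2$-linear in $\Bm$, hence each coordinate is a linear form), and $\pu{I}{\Hpub}$ is a constant matrix, so the bilinear form $(\text{matrix in }\Um)\cdot\mathrm{diag}(\text{const})\cdot(\text{vector in }\Tm,\Bm)$ is bilinear in the two variable groups — and one checks the $T_i$-isolating equations are linear in their $T_i$, so solving for $T_i$ preserves bilinearity of $P_i$ in the remaining variables. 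I expect the main obstacle to be purely organizational rather than deep: correctly tracking which indices survive shortening/puncturing and, in particular, pinning down \emph{why} precisely the first $k_0-c-\a0$ values $T_1,\dots,T_{k_0-c-\a0}$ are the ones eliminated (this hinges on aligning the systematic-form pivot columns of $\sh{I}{\Ginv}$ with those of $\pu{I}{\Hpub}$ and with the indexing of $\pu{I}{\Vm}$), and making sure the off-by-one in $\nvarsT=n_0-k_0+c-1$ — the ``$-1$'' from $T_{n_0}=0$ — is consistent with the range $T_{k_0-c+1},\dots,T_{n_0-1}$ having exactly that many entries, which it does since $(n_0-1)-(k_0-c+1)+1=n_0-k_0+c-1$.
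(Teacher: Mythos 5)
Your variable counts, the bilinearity check, and the idea of using the systematic forms to split off the $d=\k0-\codim-\a0$ equations that eliminate the $T_i$ are in the right spirit, but there is a genuine gap at the step that carries the real content of the theorem: the equation count. You arrive at $d(\n0-\k0)$ scalar equations by declaring that $\pu{I}{\Hpub}$ may be replaced ``in folded form'' by an $(\n0-\k0)\times(\n0-\a0)$ matrix. That move is not available: $\Hpub$ is a parity-check matrix of the quasi-dyadic code $\AC_r$, whose rows are \emph{not} constant on the dyadic blocks, so it admits no column-folded form (also, $\pu{I}{\Hpub}$ generates $\pu{I}{\dual{\AC_r}}=\dual{\left(\sh{I}{\AC_r}\right)}$, i.e.\ it is a parity-check matrix of $\sh{I}{\AC_r}$, not of $\pu{I}{\AC_r}$). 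The product in \eqref{eq:algebraicmodeling} a priori produces $d\cdot 2^\gamma(\n0-\k0)$ equations, and the heart of the proof is to show that this collapses by a factor $2^\gamma$ \emph{after} the ansatz $\Vm=\Tm\otimes\onev_{2^\gamma}+\onev_{\n0}\otimes\orbitp{\Bm}$ is substituted: for $\sigma\in\G$ and two rows $\cv,\cv^\sigma$ of $\pu{I}{\Hpub}$ in the same block, every row $\uv$ of $\begin{pmatrix}\Im_d&\Um\end{pmatrix}\sh{I}{\Ginv}$ is $\G$-invariant, so $\uv\star\cv^\sigma=(\uv\star\cv)^\sigma$, and since $\Vm=\Vm^\sigma+\sigma\onev_n$ and $\uv\star\cv\cdot\onev_n=\uv\cdot\cv^T=0$ (rows of $\sh{I}{\Ginv}$ lie in $\sh{I}{\AC_r}$, which is orthogonal to the row space of $\pu{I}{\Hpub}$), the two resulting equations coincide: $(\uv\star\cv)^\sigma\cdot\Vm^T=\uv\star\cv\cdot\Vm^T$. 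It is this identity, not a folding of $\Hpub$, that licenses counting only $\n0-\k0$ equations per invariant row; without it your figure $(\k0-\codim-\a0)(\n0-\k0-1)$ for the bilinear equations is unjustified.

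A secondary issue, which you flag yourself, is the claim that the eliminated variables are exactly $T_i$, $i\in\rg{1}{\k0-\codim-\a0}$. Your mechanism (``the coordinate of $\pu{I}{\Vm}$ indexed by that identity column'' of $\pu{I}{\Hpub}$) would isolate a $T$-variable attached to a parity block, not $T_i$; the paper instead pairs the $i$-th row of $\begin{pmatrix}\Im_d&\Um\end{pmatrix}\sh{I}{\Ginv}$ with the representative row of $\Hpub$ lying in the $i$-th dyadic block and uses the systematic form of \emph{both} matrices, which is precisely the pivot alignment you postponed. Finally, your determination of the shape of $\Um$ was first reverse-engineered from the claimed formula, though you do end with the correct reason ($\sh{I}{\DC}$ of dimension $\k0-\codim-\a0$ inside the $(\k0-\a0)$-dimensional code $\sh{I}{\inv{\AC}{\G}_{r}}$), so that part stands.
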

\begin{proof}
   The hypothesis that $\Vm = \Tm \otimes \onev_{2^\gamma} + \onev_{n_0}\otimes
\orbitp{\Bm}$ reduces the number of variables, the number of solutions of the system (it restricts to quasi-dyadic solutions), but it also reduces the number of equations in the system by a factor $2^\gamma$. Indeed, if we consider two rows $\cv$ and $\cv^\sigma$ from $\Hpub$ in the same quasi-dyadic block, where $\sigma\in\G$, then for any row $\uv = \uv^\sigma$ from the invariant matrix $	\begin{pmatrix} \Im_{d} & \Um  \end{pmatrix} \sh{I}{\Ginv}$, the component-wise product with $\cv^\sigma$ satisfies $\uv\star\cv^\sigma = (\uv\star\cv)^\sigma$ and the resulting equations are $(\uv\star\cv)^\sigma \cdot {\Vm}^T = (\uv\star\cv \cdot {\Vm}^T)^\sigma = \uv\star\cv \cdot {\Vm}^T$ as
$\Vm = \Vm^\sigma + \sigma \onev_n$ and $\uv\star\cv\cdot\onev_n=0$.

The $i$-th row of $ \begin{pmatrix} \Im_{d} & \Um \end{pmatrix} \Ginv$
contributes to $\n0-\k0$ equations  that contain the
variables in the $i$-row $\Um[i]$ of $\Um$,  $\Bm$ and 
$T_{i}, T_{\k0-\codim+1},\dots,T_{\n0-1}$. Moreover, 
 by using the fact that the matrices are in systematic form,
the component-wise product of this row by the $i$-th row of $\Hpub$ (the
row in the $i$-th block as we take only one row every $2^\gamma$ rows) gives a
particular equation that expresses $T_i$ in terms of
$T_{\n0-\k0+1},\dots,T_{\n0-1}$, $\Bm$ and $\Um[i]$.  \qed
\end{proof}

\section{Experimental Results}\label{sec:result}

This section is devoted to the experimental results we obtained for 
computing a Gr\"obner  basis.  We consider two approaches for solving \eqref{eq:algebraicmodeling}.  The first one consists in solving the system 
without resorting to the shortening of codes ($I = \emptyset$).  The 
second one treats  the cases where we solve the system by shortening on $I$ with different cardinalities.

We report  the tests we have done using Magma \cite{BCP97} on a machine with a
Intel\textsuperscript{\textregistered}
Xeon\textsuperscript{\textregistered} 2.60GHz processor.
We indicate in the tables the number of clock cycles of the CPU, given by Magma
using the \texttt{ClockCycles()} function, as well as the time taken
on our machine.  

\paragraph{\bf Solving~\eqref{eq:algebraicmodeling} without shortening.}

The polynomial systems for the original DAGS parameters 
(DAGS-1, DAGS-3 and DAGS-5) are so
overdetermined that it is possible to compute directly a
Gr\"obner basis of \eqref{eq:algebraicmodeling} without shortening 
($a_0=0$ \textit{i.e} $I = \emptyset$).
Table~\ref{tab:statsnonshortened} gives the number of variables and
equations for the DAGS parameters.  One can see that there are at least 3 times as many equations as variables for the original parameters.

Moreover, the complexity of computing  Gr\"obner bases  is related to the  highest degree of the polynomials reached during the computation. A precise analysis has been done for generic overdetermined homogeneous systems (called ``semi-regular'' systems) in~\cite{BFS04,BFSY05} and for the particular case of generic bilinear systems in~\cite{FSS11}.
For generic overdetermined systems,
this degree decreases when the number of polynomials increases. 
For  DAGS-1, DAGS-3 and DAGS-5, the highest degree  is small (3 or 4).  Linear equations then appear at that degree which explains why we are able to solve the systems,
even if the number of variables is quite large (up to 119 variables for DAGS-1).

\begin{table}
    \begin{center}
        \caption{Computation time for  the Gr\"obner basis of \eqref{eq:algebraicmodeling} without shortening $\DC$ ($I = \emptyset$). 
        The columns ``$\dim \DC$'' and ``$c$'' correspond to the dimension of $\DC$ and the value of  $c = \frac{q}{2^{\gamma-1}}$. 
        The columns
          $\nvarsU$ and $\nvarsX = \nvarsB + \nvarsT$ give the number of variables $\Um$ 
    and $ \Bm \cup \Tm$ respectively.  The column ``Var.'' is equal to the total number of variables $\nvarsU+\nvarsX$ while the column ``Eq.'' indicates the number of
    equations in \eqref{eq:algebraicmodeling}. 
    ``Ratio'' is equal to the ratio of the number of equations to the number of variables. 
    ``Gr\"ob.'' gives the number of CPU clock cycles as given by the \texttt{ClockCycles()} function in Magma on Intel\textsuperscript{\textregistered}
Xeon\textsuperscript{\textregistered} 2.60GHz processor. 
    ``Deg.'' gives the degree where linear
    equations are produced. The column ``Mat. size'' is the size of the biggest matrix obtained during the computations.} 
        \label{tab:statsnonshortened}
        \begin{tabular}{@{}lccrrrrrccrcr@{}} 
        \toprule
            Param. & $\dim \DC$& $\codim$ & $\nvarsU$ & $\nvarsX$ & Var. & Eq. & Ratio & Gr\"ob.   & Deg. & \multicolumn{3}{c}{Mat. size}\\
          \midrule
            DAGS-1 & 22 & 4 & 88 & 31 & 119 & 550 & 4.6 &$2^{44}$ &  3 & 314,384 &$\times$& 401,540\\
            DAGS-3 & 12 & 4 & 48 & 28 & 76 & 252 & 3.3 &$2^{44}$ & 4 & 725,895 &$\times$& 671,071\\
            DAGS-5 & 9 & 2 & 18 & 27 & 45 & 189 & 4.2 & $2^{33}$ &  3 & 100,154 &$\times $& 8,019\\
            \bottomrule
        \end{tabular}
    \end{center}
  \end{table}

  \paragraph{\bf Solving~\eqref{eq:algebraicmodeling} with  shortening.}
  
  Shortening
  \eqref{eq:algebraicmodeling} on the $i$-th  dyadic block
  consists exactly in selecting a subset of the system that does not
  contain variables from $\Um[i]$. If we are able to shorten the
  system without increasing the degree where linear equations appear,
  then the Gr\"obner basis computation is faster because the matrices
  are smaller.

  For each set of parameters and for different dimensions of 
  $\sh{I}{\DC}$, we ran 100 tests. The results are shown in
  Table~\ref{tab:nbcalc}. For these tests we always assumed that
  $b_2\notin\fq$ (see Remark~\ref{rem:inv_b2} for more details).
  We recall that when $b_2 \in\fq$ we may specialize one more variable, that is to say we take $B_1 = 0$, $B_2=0$ and $B_3=1$, and we are able to solve the system.
 In Table~\ref{tab:nbcalc} we can see that the best results  are obtained with
 $\dim \sh{I}{\DC} = 4$ for {DAGS-1}, 
 even if the number of variables is not the lowest.
 This can be explained for {DAGS-1} by the fact that highest degree is $3$  while when 
$\dim \sh{I}{\DC} = 2$ or $3$, the highest degree is $4$. 
 
 The figures obtained for  {DAGS-3} show that the highest degree is  $4$  
 for any  value of $\dim \sh{I}{\DC} \geq 3$. But when $\dim \sh{I}{\DC} = 4$
 more linear equations are produced because the ratio of the number of equations to the number of variables is larger.
 With a $\dim \sh{I}{\DC} = 2$, as Barelli and Couvreur~\cite{BC18} used
in their attack, the ratio  is small ($1.17$), and the maximal degree reached during the
Gr\"obner basis computation is too large ($\geq 6$) to get a reasonable
complexity. We had to stop because of a lack of memory and there was no linear equation at this degree.

 For DAGS-5  the value of  $\dim \sh{I}{\DC}$ has less influence on the performances because the system always has linear equations in degree $3$.
\begin{table}
    \begin{center}
        \caption{Computation time for  the Gr\"obner basis in \eqref{eq:algebraicmodeling} with the shortening of $\DC$. The column ``Gr\"ob.'' gives the number of CPU clock cycles as given by the \texttt{ClockCycles()} function in Magma on Intel\textsuperscript{\textregistered}
Xeon\textsuperscript{\textregistered} 2.60GHz processor.  
}
        \label{tab:nbcalc}
        \begin{tabular}{@{}ccccccrccrcr@{}} \toprule 
            Param. & $\dim\sh{I}{\DC}$ & Var. & Eq. & Ratio & Gr\"ob. & Time & Mem. (GB) & Deg. &  \multicolumn{3}{c}{Mat. size}\\ 
            \midrule
            DAGS-1 & 2 & 39 & 50 & 1.28 & $2^{39}$ & $276$s & $2.21$ & 4 & 76,392&$\times$& 62,518 \\
                  & 3 & 43 & 75 & 1.74 & $2^{38}$ & $163$s & $1.11$ & 4 & 97,908& $\times$& 87,238 \\ 
                  & 4 & 47 & 100 & 2.13 & $2^{33}$ & $4$s & $0.12$ & 3 & 11,487& $\times$ & 9,471 \\
                  & 5 & 51 & 125 & 2.45 & $2^{34}$ & $6$s & $0.24$ & 3 & 11,389& $\times$ & 13,805 \\
            \midrule
            DAGS-3 & 2 & 36 & 42 & 1.17 & -- & -- & $\ge 139$ & $\ge 6$ & --  & & \\
                  & 3 & 40 & 63 & 1.58 & $2^{39}$ & $321$s & $1.24$ & 4 & 85,981 &$\times$ & 101,482 \\
                  & 4 & 44 & 84 & 1.91 & $2^{37}$ & $70$s & $1.11$ & 4 & 103,973 &$\times$& 97,980 \\
                  & 5 & 48 & 105 & 2.19 & $2^{38}$ & $140$s & $1.48$ & 4 & 170,256 &$\times$ & 161,067 \\
            \midrule
            DAGS-5 & 2 & 31 & 42 & 1.35 & $2^{31}$ & $0.4$s & $0.12$ & 3 & 6,663 &$\times$ & 4,313 \\
                  & 3 & 33 & 63 & 1.91 & $2^{31}$ & $0.4$s & $0.13$ & 3 & 4,137&$\times$ & 4,066 \\
                  & 4 & 35 & 84 & 2.40 & $2^{31}$ & $0.5$s & $0.15$ & 3 & 5,843&$\times$ & 4,799 \\
                  & 5 & 37 & 105 & 2.84 & $2^{31}$ & $0.4$s & $0.19$ & 3 & 6,009&$\times$ & 4,839 \\
            \bottomrule
        \end{tabular}
      \end{center}
\end{table}

\paragraph{\bf Updated DAGS parameters.}
After the publication of the attack in \cite{BC18}, the authors of
DAGS proposed new parameters on their website \cite{webDAGS}. They are
given in Table~\ref{tab:dagsparam}. The computation of the Gr\"obner
basis is no longer possible with this new set of parameters.  For
DAGS-1.1, the number of variables is so high that the computation
involves a matrix in degree $4$ with about two millions of rows, and
we could not perform the computation.  For {DAGS-3.1}, as
$\codim = \k0$, the code $\DC$ does not exist. Even by considering the
dual of the public code as suggested by Remark~\ref{rem:dual}, it was
not feasible to solve the system.  This is due to the fact the system
is underdetermined: the ratio is $0.7$ as shown in
Table~\ref{tab:statsnonshortened}.  As for {DAGS-5.1}, the system has
too many variables and the ratio is too low.

 \begin{table}[h]
    \begin{center}
    \caption{Updated parameters. Columns are the same as Table~\ref{tab:statsnonshortened}}
        \begin{tabular}{@{}lccrrrrr@{}}  
        \toprule
            Param. & $\dim \DC$& $\codim$ & $\nvarsU$ & $\nvarsX$ & Var. & Eq. & Ratio \\
\midrule
          DAGS-1.1 & 18 & 8 & 144 & 35 & 179 & 450 & 2.5 \\
          DAGS-3.1 & 0 & 16 & -- & --& -- & 0 & 0 \\
          DAGS-3.1 dual & 6 & 16 & 96 & 34 & 130 & 90 & 0.7\\
          DAGS-5.1 & 12 & 16 & 192 & 40 & 232 & 252& 1.1 \\
            \bottomrule
        \end{tabular}
    \end{center}
  \end{table}

  However, as the systems are bilinear, a simple approach consisting
  in specializing a set of variables permits to get linear
  equations. For instance, with the new parameters for DAGS-1.1, and
  according to Theorem~\ref{theo:numbereq}, the algebraic system
  contains $\k0-\codim = \dim\DC = 18$ sets of $\n0-\k0-1=25$
  equations bilinear in $\codim=8$ variables $\Um$ and
  $\n0-\k0+\codim + \gamma-3 = 35$ variables $\Vm$. If we specialize
  the $\Um$ variables in a set of equations, we get $25$ linear
  equations in $35$ variables $\Vm$. As $q=2^6$ for DAGS-1.1,
  specializing $16$ variables $\Um$ gives a set of $50$ linear
  equations in $35$ variables that can be solved in at most
  $35^3 < 2^{15.39}$ finite field operations, and breaking DAGS-1.1
  requires to test all values of the $16$ variables $\Um$ in $\fq$,
  hence $2^{96}$ specializations, leading to an attack with
  $2^{111.39}$ operations, below the $128$-bit security claim.  This
  new set of parameters for DAGS-1.1 clearly does not take into
  account this point.

  The next section will show how to
  cryptanalyze efficiently the DAGS-1.1 parameters.

\section{Hybrid Approach on DAGS v2}\label{sec:hybrid}

We will show in this section that a hybrid
approach mixing exhaustive search and Gr\"obner basis provides an
estimated work factor of $2^{83}$ for  DAGS-1.1.

As the algebraic system is still highly overdetermined with a ratio of
2.5 between the number of variables and the number of equations, we
can afford to reduce the number of variables by shortening $\DC$ over
$\a0$ dyadic blocks while keeping a ratio large enough. For instance
if $\a0=\k0-\codim -2$ we have $\dim\sh{I}{\DC}=2$ and a system of
$50$ bilinear equations in $51$ variables.
On the other hand, specializing some
variables as in the hybrid approach from~\cite{BFP09} permits to
increase the ratio. For each value of the variables we compute a Gr\"obner basis of the specialized
system. When the Gr\"obner basis is $\langle 1\rangle$, it means that
the system has no solution, and it permits to ``cut branches'' of the
exhaustive search. Experimentally, computations are quite fast for the
wrong guesses, because the Gr\"obner basis computation stops immediately
when 1 is found in the ideal.

Moreover, if we specialize an entire row
of $\codim$ variables $\Um$, as the equations are
bilinear in $\Um$ and $\Vm$, then we get $\n0-\k0-\a0$ linear
equations in $\Vm$, which reduces the number of
variables. 
Table~\ref{tab:statshybrid} gives the complexity of the Gr\"obner
basis computation for {DAGS1.1}, for $\codim = 8$ variables $\Um$
specialized and different values of $\dim\sh{I}{\DC}$. 

\begin{table}[h]
    \begin{center}
      \caption{Experimental complexity of Gr\"obner basis computations with $\codim = 8$ specializations on $\Um$ for DAGS1.1. The system after specialization contains ``Var'' remaining variables. The column ``Linear'' (resp. ``Bilinear'') gives the number of linear (resp. bilinear) equations in the system.
        ``False'' contains the complexity of the
Gr\"obner basis computation for a wrong specialization, ``True'' for a correct one. The last column gives the global
complexity of the attack if we have to test all possible values of the
variables in $\fq = \F_{2^6}$, that is $2^{6\times 8}\times$ False + True.} 
        \label{tab:statshybrid}
        \begin{tabular}{@{}crccccc@{}} \toprule 
             $\dim\sh{I}{\DC}$ & Var &Linear & Bilinear & False  & ~True  & ~Total  \\ 
            \midrule
             2 & $43$ & $25$ & $25$ & $2^{35}$ & $2^{36}$ & $2^{83}$\\
             3 & $51$ & $25$ & $50$ & $2^{35}$ & $2^{36}$ & $2^{83}$\\ 
             4 & $59$ & $25$ & $75$ & $2^{38}$ & $2^{39}$ & $2^{86}$\\
             5 & $67$ & $25$ & $100$ & $2^{40}$ & $2^{40}$ & $2^{88}$\\
            \bottomrule
        \end{tabular}
      \end{center}
\end{table}

\section*{Acknowledgements}

This work has been supported by  the French ANR projects
MANTA (ANR-15-CE39-0013) and
CBCRYPT (ANR-17-CE39-0007). The authors are extremely grateful to \'Elise Barelli for kindly giving  her Magma code and for helpful discussions.


%
%

%
%
%


\begin{thebibliography}{10}
\providecommand{\url}[1]{\texttt{#1}}
\providecommand{\urlprefix}{URL }
\providecommand{\doi}[1]{https://doi.org/#1}

\bibitem{BBB+18}
Banegas, G., Barreto, P., Odilon~Boidje, B., Cayrel, P.L., Ndollane~Dione, G.,
  Gaj, K., Gueye, C.T., Haeussler, R., Klamti, J., Ndiaye, O., Tri~Nguyen, D.,
  Persichetti, E., Ricardini, J.: {DAGS}: Key encapsulation using dyadic gs
  codes. Journal of Mathematical Cryptology  \textbf{12}(4),  221--239 (09
  2018)

\bibitem{webDAGS}
Banegas, G., Barreto, P.S.L.M., Boidje, B.O., Cayrel, P.L., Dione, G.N., Gaj,
  K., Gueye, C.T., Haeussler, R., Klamti, J.B., N’diaye, O., Nguyen, D.T.,
  Persichetti, E., Ricardini, J.E.: {DAGS}: Key encapsulation for dyadic {GS}
  codes, specifications v2 (09 2018)

\bibitem{BBBCDGGHKNNPR17}
Banegas, G., Barreto, P.S., Boidje, B.O., Cayrel, P.L., Dione, G.N., Gaj, K.,
  Gueye, C.T., Haeussler, R., Klamti, J.B., N'diaye, O., Nguyen, D.T.,
  Persichetti, E., Ricardini, J.E.: {D}{A}{G}{S} : Key encapsulation for dyadic
  {G}{S} codes.
  \url{https://csrc.nist.gov/CSRC/media/Projects/Post-Quantum-Cryptography/documents/round-1/submissions/DAGS.zip}
  (Nov 2017), first round submission to the NIST post-quantum cryptography call

\bibitem{BFS04}
Bardet, M., Faug{\`e}re, J.C., Salvy, B.: On the complexity of gr{\"o}bner
  basis computation of semi-regular overdetermined algebraic equations. In:
  ICPSS'04. pp. pp. 71--75. (2004), international Conference on Polynomial
  System Solving, November 24 - 25 - 26, Paris, France

\bibitem{BFSY05}
Bardet, M., Faug{\`e}re, J.C., Salvy, B., Yang, B.Y.: Asymptotic behaviour of
  the degree of regularity of semi-regular quadratic polynomial systems. In:
  MEGA'05. p. 15 p. (2005), eighth International Symposium on Effective Methods
  in Algebraic Geometry, Porto Conte, Alghero, Sardinia (Italy), May 27th --
  June 1st

\bibitem{BIGQUAKE}
Bardet, M., Barelli, {\'{E}}., Blazy, O., Cando~Torres, R., Couvreur, A.,
  Gaborit, P., Otmani, A., Sendrier, N., Tillich, J.P.: {BIGQUAKE}.
  \url{https://bigquake.inria.fr} (Nov 2017), {NIST} Round 1 submission for
  Post-Quantum Cryptography

\bibitem{B17}
Barelli, {\`{E}}.: {On the security of some compact keys for McEliece scheme}.
  In: WCC Workshop on Coding and Cryptography (Sep 2017)

\bibitem{BC18}
Barelli, {\'{E}}., Couvreur, A.: An efficient structural attack on {NIST}
  submission {DAGS}. In: Peyrin, T., Galbraith, S. (eds.) Advances in
  Cryptology -- ASIACRYPT 2018. pp. 93--118. Springer International Publishing,
  Cham (2018)

\bibitem{BCGO09}
Berger, T.P., Cayrel, P.L., Gaborit, P., Otmani, A.: Reducing key length of the
  {McEliece} cryptosystem. In: Preneel, B. (ed.) Advances in Cryptology -
  AFRICACRYPT~2009. LNCS, vol.~5580, pp. 77--97. Gammarth, Tunisia (Jun~21-25
  2009)

\bibitem{BCLMNPPSSSW17}
Bernstein, D.J., Chou, T., Lange, T., von Maurich, I., Niederhagen, R.,
  Persichetti, E., Peters, C., Schwabe, P., Sendrier, N., Szefer, J., Wen, W.:
  Classic {M}c{E}liece: conservative code-based cryptography (Nov 2017), first
  round submission to the NIST post-quantum cryptography call

\bibitem{BLP10}
Bernstein, D.J., Lange, T., Peters, C.: Wild {M}c{E}liece. In: Biryukov, A.,
  Gong, G., Stinson, D.R. (eds.) Selected Areas in Cryptography. LNCS,
  vol.~6544, pp. 143--158 (2010)

\bibitem{BFP09}
Bettale, L., Faug\`ere, J.C., Perret, L.: Hybrid approach for solving
  multivariate systems over finite fields. Journal of Mathematical Cryptology
  \textbf{3}(3),  177--197 (2009)

\bibitem{BCP97}
Bosma, W., Cannon, J., Playoust, C.: The {Magma} algebra system {I}: The user
  language. J. Symbolic Comput.  \textbf{24}(3/4),  235--265 (1997)

\bibitem{CGGOT14}
Couvreur, A., Gaborit, P., Gauthier{-}Uma{\~{n}}a, V., Otmani, A., Tillich,
  J.P.: Distinguisher-based attacks on public-key cryptosystems using
  {Reed-Solomon} codes. Des. Codes Cryptogr.  \textbf{73}(2),  641--666 (2014)

\bibitem{COT14}
Couvreur, A., Otmani, A., Tillich, J.P.: Polynomial time attack on wild
  {M}c{E}liece over quadratic extensions. In: Nguyen, P.Q., Oswald, E. (eds.)
  Advances in Cryptology - EUROCRYPT~2014. LNCS, vol.~8441, pp. 17--39.
  Springer Berlin Heidelberg (2014)

\bibitem{COT17}
Couvreur, A., Otmani, A., Tillich, J.P.: Polynomial time attack on wild
  {M}c{E}liece over quadratic extensions. IEEE Trans. Inform. Theory
  \textbf{63}(1),  404--427 (Jan 2017)

\bibitem{FOPPT14a}
Faug{\`{e}}re, J.C., Otmani, A., Perret, L., de~Portzamparc, F., Tillich, J.P.:
  Structural weakness of compact variants of the {McEliece} cryptosystem. In:
  Proc. IEEE Int. Symposium Inf. Theory - ISIT~2014. pp. 1717--1721. Honolulu,
  HI, USA (Jul 2014)

\bibitem{FOPPT16}
Faug{\`{e}}re, J.C., Otmani, A., Perret, L., de~Portzamparc, F., Tillich, J.P.:
  Folding alternant and {Goppa Codes} with non-trivial automorphism groups.
  IEEE Trans. Inform. Theory  \textbf{62}(1),  184--198 (2016)

\bibitem{FOPPT15}
Faug{\`{e}}re, J.C., Otmani, A., Perret, L., de~Portzamparc, F., Tillich, J.P.:
  Structural cryptanalysis of {McEliece} schemes with compact keys. Des. Codes
  Cryptogr.  \textbf{79}(1),  87--112 (2016)

\bibitem{FOPT10}
Faug{\`e}re, J.C., Otmani, A., Perret, L., Tillich, J.P.: Algebraic
  cryptanalysis of {McEliece} variants with compact keys. In: Advances in
  Cryptology - EUROCRYPT~2010. LNCS, vol.~6110, pp. 279--298 (2010)

\bibitem{FPP14}
Faug{\`{e}}re, J.C., Perret, L., de~Portzamparc, F.: Algebraic attack against
  variants of {McEliece} with {Goppa} polynomial of a special form. In:
  Advances in Cryptology - ASIACRYPT~2014. LNCS, vol.~8873, pp. 21--41.
  Springer, Kaoshiung, Taiwan, R.O.C. (Dec 2014)

\bibitem{FSS11}
{Faug{\`e}re}, J.C., {Safey El Din}, M., {Spaenlehauer}, P.J.: {Gr\"obner}
  bases of bihomogeneous ideals generated by polynomials of bidegree (1,1):
  Algorithms and complexity. J. Symbolic Comput.  \textbf{46}(4),  406--437
  (2011)

\bibitem{G05}
Gaborit, P.: Shorter keys for code based cryptography. In: Proceedings of the
  2005 International Workshop on Coding and Cryptography ({WCC} 2005). pp.
  81--91. Bergen, Norway (Mar 2005)

\bibitem{GOT12}
Gauthier, V., Otmani, A., Tillich, J.P.: A distinguisher-based attack of a
  homomorphic encryption scheme relying on {Reed-Solomon} codes. CoRR
  \textbf{abs/1203.6686} (2012)

\bibitem{GOT12a}
Gauthier, V., Otmani, A., Tillich, J.P.: A distinguisher-based attack on a
  variant of {McEliece's} cryptosystem based on {Reed-Solomon} codes. CoRR
  \textbf{abs/1204.6459} (2012)

\bibitem{M78}
McEliece, R.J.: A Public-Key System Based on Algebraic Coding Theory, pp.
  114--116. Jet Propulsion Lab (1978), dSN Progress Report 44

\bibitem{MB09}
Misoczki, R., Barreto, P.: Compact {McEliece} keys from {Goppa} codes. In:
  Selected Areas in Cryptography. Calgary, Canada (Aug~13-14 2009)

\bibitem{OT15}
Otmani, A., Tal{\'{e}}-Kalachi, H.: Square code attack on a modified
  {S}idelnikov cryptosystem. In: Hajji, S.E., Nitaj, A., Carlet, C., Souidi,
  E.M. (eds.) Codes, Cryptology, and Information Security - First International
  Conference, {C2SI} 2015, Rabat, Morocco, May 26-28, 2015, Proceedings - In
  Honor of Thierry Berger. Lecture Notes in Computer Science, vol.~9084, pp.
  173--183. Springer (2015)

\bibitem{W10}
Wieschebrink, C.: Cryptanalysis of the {Niederreiter} public key scheme based
  on {GRS} subcodes. In: Post-Quantum Cryptography~2010. LNCS, vol.~6061, pp.
  61--72. Springer (2010)

\end{thebibliography}
\end{document}